\newcommand{\kbp}{\texttt{$k$-BAL\-ANCED PAR\-TI\-TION\-ING}\xspace}
\newcommand{\bisect}{\texttt{BI\-SEC\-TION}\xspace}
\begin{document}

\title{Fast Balanced Partitioning is Hard\\
Even on Grids and Trees\footnote{A preliminary version of this article appeared
at the \emph{37th International Symposium on Mathematical Foundations of
Computer Science}~\cite{AF12}}}

\date{}

\author{Andreas Emil Feldmann}
\affil{Combinatorics and Optimization Department,

University of Waterloo, Canada

\href{mailto:feldmann.a.e@gmail.com}
{\texttt{feldmann.a.e@gmail.com}}
}

\maketitle

\begin{abstract}
Two kinds of approximation algorithms exist for the \kbp problem: those that are
fast but compute unsatisfactory approximation ratios, and those that guarantee
high quality ratios but are slow. In this article we prove that this tradeoff
between \emph{running time} and \emph{solution quality} is unavoidable. For the
problem a minimum number of edges in a graph need to be found that, when cut,
partition the vertices into $k$ equal-sized sets. We develop a general reduction
which identifies some sufficient conditions on the considered graph class in
order to prove the hardness of the problem. We focus on two combinatorially
simple but very different classes, namely \emph{trees} and \emph{solid grid
graphs}. The latter are finite connected subgraphs of the infinite
two-dimensional grid without holes.
We apply the reduction to show that for solid grid graphs it is NP-hard to
approximate the optimum number of cut edges within any satisfactory ratio. We
also consider solutions in which the sets may deviate from being equal-sized.
Our reduction is applied to grids and trees to prove that no \emph{fully
polynomial time} algorithm exists that computes solutions in which the sets are
arbitrarily close to equal-sized. This is true even if the number of edges cut
is allowed to increase when the limit on the set sizes decreases. These
are the first bicriteria inapproximability results for the \kbp problem. 
\end{abstract}

\paragraph*{Keywords:} balanced partitioning; bicriteria approximation;
inapproximability; grid graphs; trees.

\section{Model and Setting}

We consider the \kbp problem in which the $n$ vertices of a graph need to be
partitioned into $k$ sets of size at most $\lceil n/k\rceil$ each. At the same
time the \emph{cut size}, which is the number of edges connecting vertices from
different sets, needs to be minimised. This problem has many applications
including 
VLSI circuit design~\cite{BhattL84},
image processing~\cite{LeahyW93},
computer vision~\cite{KwatraSETB03},
route planning~\cite{DellingGPW11},
and divide-and-conquer algorithms~\cite{LiptonT80}.
In our case the motivation (cf.~\cite[Section~4]{ArbenzLMMS07}) stems from
parallel computations for finite element models (FEMs). In these a continuous
domain of a physical model is discretised into a mesh of sub-domains (the
elements). The mesh induces a graph in which the vertices are the elements and
each edge connects neighbouring sub-domains. A vertex then
corresponds to a computational task in the physical simulation, during which
tasks that are adjacent in the graph need to exchange data. Since the model is
usually very large, the computation is done in parallel.
Hence the tasks need to be scheduled on to $k$ machines (which corresponds to a
partition of the vertices) so that the loads of the machines (the sizes of the
sets in the partition) are balanced.
At the same time the inter-processor communication (the cut size) needs to be
minimised since this constitutes a bottleneck in parallel computing. Although
we will give some conclusion for more general cases, in this article we mainly
focus on 2D FEMs. For these the corresponding graph is a planar graph, typically
given by a triangulation or a quadrilateral tiling of the
plane~\cite{ElmanSW05}. We concentrate on the latter and consider so called
\emph{solid grid graphs} which model tessellations into squares. A \emph{grid
graph} is a finite subgraph of the infinite two-dimensional grid. An interior
face of a grid graph is called a \emph{hole} if more than four edges surround
it. If a grid graph is connected and does not have any holes, it is called
\emph{solid}.

In general it is NP-hard to approximate the cut size of \kbp within any finite
factor~\cite{RackeA06}. However the corresponding reduction relies on the fact
that a general graph may not be connected and thus the optimal cut size can be
zero. Since a 2D FEM always induces a connected planar graph, this strong
hardness result may not apply. Yet even for trees~\cite{AFFoschini12} it is
NP-hard to approximate the cut size within~$n^c$, for any constant $c<1$. The
latter result however relies on the fact that the maximum degree of a tree can
be arbitrarily large. Typically though, a 2D FEM induces a graph of constant
degrees, as for instance in grid graphs. In fact, even though approximating the
cut size in constant degree trees is APX-hard~\cite{AFFoschini12}, there exists
an $\mc{O}(\log(n/k))$ approximation algorithm~\cite{MacGregor1978}
for these. This again raises the question of whether efficient approximation
algorithms can be found for graphs induced by 2D FEMs. In this article we give a
negative answer to this question. We prove that it is NP-hard to approximate the
cut size within $n^c$ for any constant $c<1/2$ for solid grid graphs. We also
show that this is asymptotically tight by providing a corresponding
approximation algorithm.

Hence when each set size is required to be at most $\lceil n/k\rceil$ (the
\emph{perfectly balanced} case), the achievable approximation factors are not
satisfactory. To circumvent this issue, both in theory and practice it has
proven beneficial to consider \emph{bicriteria approximations}. Here
additionally the sets may deviate from being perfectly balanced. The computed
cut size is compared with the optimal perfectly balanced solution. Throughout
this article we denote the approximation ratio on the cut size by~$\alpha$.

For planar graphs the famous Klein-Plotkin-Rao Theorem~\cite{KleinPR93} can be
combined with spreading metric techniques~\cite{EvenNRS99} in order to compute a
solution for which $\alpha\in\mc{O}(1)$ and each set has size at most $2\lceil
n/k\rceil$. This needs $\mc{\widetilde O}(n^3)$ time or $\mc{\widetilde O}(n^2)$
expected time. For the same guarantee on the set sizes, a faster algorithm
exists for solid grid graphs~\cite{AF12thesis}. It runs in $\mc{\widetilde
O}(n^{1.5})$ time but approximates the cut size within $\alpha\in\mc{O}(\log
k)$. However it is not hard to see how set sizes that deviate by a factor of $2$
from being perfectly balanced may be undesirable for practical applications. For
instance in parallel computing this means a significant slowdown. This is why
graph partitioning heuristics such as Metis~\cite{KarypisK95} or
Scotch~\cite{ChevalierP08} allow to compute \emph{near-balanced} partitions.
Here each set has size at most $(1+\eps)\lceil n/k\rceil$, for arbitrary
$\eps>0$. The heuristics used in practice however do not give any guarantees on
the cut size. For general graphs the best algorithm~\cite{AFFoschini12} known
that gives such a guarantee, will compute a near-balanced solution for which
$\alpha\in\mc{O}(\log n)$. However the running time of this algorithm increases
exponentially with decreasing $\eps$s. Therefore this algorithm is too slow for
practical purposes. Do algorithms exist that are both \emph{fast} and compute
\emph{near-balanced} solutions, and for which rigorous guarantees can be given
on the computed cut size? Note that the factor $\alpha$ of the above algorithm
does not depend on $\eps$. It therefore suggests itself to devise an algorithm
that will compensate the cost of being able to compute near-balanced solutions
not in the running time but in the cut size (as long as it does not increase too
much). In this article however, we show that no such algorithm exists that is
reasonable for practical applications. More precisely, we consider \emph{fully
polynomial time} algorithms for which the running time is polynomial in
$n/\eps$, for a value $\eps>0$ that is part of the input. We show that, unless
P=NP, for solid grid graphs there is no such algorithm for which the computed
solution is near-balanced and $\alpha=n^c/\eps^d$, for any constants $c$ and $d$
where $c<1/2$.

Our main contribution is a general reduction with which hardness results such as
the two described above can be generated. For it we identify some sufficient
conditions on the considered graphs which make the problem hard. Intuitively
these conditions entail that cutting vertices from a graph must be expensive in
terms of the number of edges used. We also apply the proposed reduction to
general graphs and trees, in order to complement the known results. For general
(disconnected) graphs we can show that, unless P=NP, there is no finite value
for $\alpha$ allowing a fully polynomial time algorithm that computes
near-balanced partitions. For trees we can prove that this is true for any
$\alpha=n^c/\eps^d$, for arbitrary constants $c$ and $d$ where~$c<1$. These
results demonstrate that the identified sufficient conditions capture a
fundamental trait of the \kbp problem. In particular since we prove the hardness
for two combinatorially simple graph classes which however are very dissimilar
(as for instance documented by the high \emph{tree-width} of solid
grids~\cite{DiestelJGT99}). For solid grid graphs we harness their isoperimetric
properties in order to satisfy the conditions, while for trees we use their
ability to have high vertex degrees instead. These are the first bicriteria
inapproximability results for the problem. We also show that all of them are
asymptotically tight by giving corresponding approximation algorithms.

\paragraph*{Related Work.}
Apart from the results mentioned above, Simon and Teng~\cite{SimonT97} gave a
framework with which bicriteria approximations to \kbp can be computed. It is a
recursive procedure that repeatedly uses a given algorithm for \emph{sparsest
cuts}. If a sparsest cut can be approximated within a factor of $\beta$ then
their algorithm obtains ratios~$\eps=1$ and $\alpha\in\mc{O}(\beta\log k)$.
The best factor $\beta$ for general graphs~\cite{AroraRV04} is
$\mc{O}(\sqrt{\log n})$. For planar graphs Park and Phillips~\cite{ParkP93}
show how to obtain $\beta\in\mc{O}(t)$ in $\mc{\widetilde O}(n^{1.5+1/t})$ time,
for arbitrary $t$. On solid grid graphs constant approximations to sparsest cuts
can be computed in linear time~\cite{AFDasWidmayer11}. For general graphs the
best ratio~$\alpha$ is achieved by Krauthgamer {\it et
al.}~\cite{KrauthgamerNS09}. For $\eps=1$ they give an algorithm for which
$\alpha\in\mc{O}(\sqrt{\log n \log k})$.

Near-balanced partitions were considered by Andreev and R\"acke~\cite{RackeA06}
who showed that a ratio of $\alpha\in\mc{O}(\log^{1.5}(n)/\eps^2)$ is possible.
This was later improved~\cite{AFFoschini12} to $\alpha\in\mc{O}(\log n)$, making
$\alpha$ independent of $\eps$. In the latter paper also a PTAS is given for
trees. For perfectly balanced solutions, there is an approximation algorithm
achieving $\alpha\in\mc{O}(\Delta\log_\Delta(n/k))$ for
trees~\cite{MacGregor1978}, where $\Delta$ is the maximum degree.

The special case when $k=2$ (the \bisect problem) has been thoroughly studied.
The problem is NP-hard in general~\cite{GareyJS76} and can be approximated
within $\mc{O}(\log n)$~\cite{Racke08}. Assuming the Unique Games Conjecture, no
constant approximations are possible in polynomial time~\cite{KhotV05}. Also,
unless NP$\subseteq\cap_{\epsilon > 0}$ BPTIME($2^{n^\epsilon}$), no PTAS
exists for this problem~\cite{Khot06}.
Leighton and Rao~\cite{LeightonR99} show how near-balanced solutions for which
$\alpha\in\mc{O}(\beta/\eps^3)$ can be computed, where $\beta$ is as above. In
contrast to the case of arbitrary~$k$, the \bisect problem can be computed
optimally in $\mc{O}(n^4)$ time for solid grid graphs~\cite{AFWidmayer11}, and
in $\mc{O}(n^2)$ time for trees~\cite{MacGregor1978}. For planar graphs the
complexity of \bisect is unknown.

\newcommand{\tp}{\texttt{3-PARTITION}\xspace}

\section{A General Reduction}

To derive the hardness results we give a reduction from the \tp problem defined
below. It is known that \tp is strongly NP-hard~\cite{GareyJ79} which means that
it remains so even if all integers are polynomially bounded in the size of the
input.

\begin{dfn}[\tp]\label{dfn:tp}
Given $3k$ integers $a_1,\ldots,a_{3k}$ and a threshold~$s$ such that
$s/4<a_i<s/2$ for each $i\in\{1,\ldots 3k\}$, and $\sum_{i=1}^{3k} a_i=ks$, find
a partition of the integers into $k$ triples such that each triple sums up to
exactly $s$.
\end{dfn}

We will set up a general reduction from \tp to different graph
classes. This will be achieved by identifying some structural properties that
a graph constructed from a \tp instance has to fulfil, in order to show the
hardness of the \kbp problem. We will state a lemma which asserts that if the
constructed graph has these properties then an algorithm computing near-balanced
partitions and approximating the cut size within some $\alpha$, is able to
decide the \tp problem. We will see that carefully choosing the involved
parameters for each of the given graph classes yields the desired reductions.
While describing the structural properties we will exemplify them for general
(disconnected) graphs which constitute an easily understandable case. For these
graphs it is NP-hard to approximate the cut size within any finite
factor~\cite{RackeA06}. We will show that, unless P=NP, no fully polynomial time
algorithm exists for any $\alpha$ when near-balanced solutions are desired.

For any \tp instance we construct $3k$ graphs, which we will call
\emph{gadgets}, with a number of vertices proportional to the integers $a_1$
to~$a_{3k}$. In particular, for general graphs each gadget $G_i$, where
$i\in\{1,\ldots 3k\}$, is a connected graph on $2a_i$ vertices. This assures
that the gadgets can be constructed in polynomial time since \tp is strongly
NP-hard. In general we will assume that we can construct $3k$ gadgets for the
given graph class such that each gadget $G_i$ has $pa_i$ vertices for some
parameter $p$ specific to the graph class. These gadgets will then be connected
using some number $m$ of edges. The parameters $p$ and $m$ may depend on the
values of the given \tp instance. For the case of general graphs we chose $p=2$
and we let $m=0$, i.e.\ the gadgets are disconnected. In order to show that the
given gadgets can be used in a reduction, we require that an upper bound can be
given on the number of vertices that can be cut out using a limited number of
edges. More precisely, given any colouring of the vertices of all gadgets
into~$k$ colours, by a \emph{minority vertex} in a gadget $G_i$ we mean a vertex
that has the same colour as less than half of $G_i$'s vertices. Any partition of
the vertices of all gadgets into $k$ sets induces a colouring of the vertices
into $k$ colours. For approximation ratios $\alpha$ and $\eps$, the property we
need is that cutting the graph containing $n$ vertices into $k$ sets using at
most $\alpha m$ edges, produces less than~$p-\eps n$ minority vertices in total.
Clearly $\eps$ needs to be sufficiently small so that the graph exists. When
considering fully polynomial time algorithms,~$\eps$ should however also not be
too small since otherwise the running time may not be polynomial. For general
graphs we achieve this by choosing $\eps=(2ks)^{-1}$. This means that $p-\eps
n=1$ since $n=\sum_{i=1}^{3k}pa_i=2ks$. Simultaneously the running time of a
corresponding algorithm is polynomial in the size of the \tp instance since \tp
is strongly NP-hard. Additionally the desired condition is met for this graph
class since no gadget can be cut using $\alpha m=0$ edges. The following
definition formalises the needed properties.

\begin{dfn}\label{dfn:red-set}
For each instance $I$ of \tp with integers $a_1$ to $a_{3k}$ and threshold $s$,
a \emph{reduction set for \kbp} contains a graph determined by some given
parameters $m\geq 0$, $p\geq 1$, $\eps\geq 0$, and $\alpha\geq 1$ which may
depend on $I$. Such a graph constitutes $3k$ (disjoint) gadgets connected
through $m$ edges. Each gadget $G_i$, where $i\in\{1,\ldots,3k\}$, has $pa_i$
vertices. Additionally, if a partition of the $n$ vertices of the graph into $k$
sets has a cut size of at most $\alpha m$, then in total there are less than
$p-\eps n$ minority vertices in the induced colouring.
\end{dfn}

Obviously the involved parameters have to be set to appropriate values in order
for the reduction set to exist. For instance $p$ must be an integer and~$\eps$
must be sufficiently small compared to $p$ and $n$. Since however the
values will vary with the considered graph class we fix them only later. In the
following lemma we will assume that the reduction set exists and therefore all
parameters were chosen appropriately. It assures that given a reduction set, a
bicriteria approximation algorithm for \kbp can decide the \tp problem. For
general graphs we have seen above that a reduction set exists for any finite
$\alpha$ and $\eps=(2ks)^{-1}$. This means that a fully polynomial time
algorithm for \kbp computing near-balanced partitions and approximating the cut
size within $\alpha$, can decide the \tp problem in polynomial time. Such an
algorithm can however not exist, unless P=NP.

\begin{lem}\label{lem:reduction}
Let an algorithm $\mc{A}$ be given that for any graph in a reduction set for
\kbp computes a partition of the $n$ vertices into $k$ sets of size at most
$(1+\eps)\lceil n/k \rceil$ each. If the cut size of the computed solution
deviates by at most $\alpha$ from the optimal cut size of a perfectly balanced
solution, then the algorithm can decide the \tp problem.
\end{lem}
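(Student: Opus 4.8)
The plan is to show that $\mathcal A$ decides \tp by running it on the graph $G$ from the reduction set and checking the cut size of the returned partition against the threshold $\alpha m$. First I would establish the "yes" direction: given a valid \tp partition into triples summing to $s$, I form $k$ sets of the vertices of $G$ by taking, for each triple $\{a_i,a_j,a_\ell\}$, the union of the vertex sets of the gadgets $G_i,G_j,G_\ell$. Each such set has exactly $p(a_i+a_j+a_\ell)=ps=pn/(3k)\cdot\frac{3}{1}$... more precisely $ps = n/k$ vertices (since $n=\sum_i pa_i = pks$), so this is a \emph{perfectly balanced} partition, and its cut size is at most $m$ because the only edges that can be cut are the $m$ connecting edges between gadgets. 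Hence the optimal perfectly balanced cut size is at most $m$, and therefore the solution $\mathcal A$ returns has cut size at most $\alpha m$.

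For the converse, suppose $\mathcal A$'s output has cut size at most $\alpha m$; I must show a valid \tp partition exists. By Definition~\ref{dfn:red-set}, the induced colouring has fewer than $p-\eps n$ minority vertices in total. The key counting step: since $\mathcal A$'s sets have size at most $(1+\eps)\lceil n/k\rceil$, the total "excess capacity" is small, so almost all vertices of each gadget must lie in one common colour class (the majority colour of that gadget is well defined because a gadget has $pa_i > \eps n$... here one uses that $p-\eps n \ge 1$, i.e.\ $\eps$ is small enough, forcing at most zero minority vertices per gadget — actually the bound $p-\eps n$ is chosen exactly so the total number of minority vertices is less than $p$, and since each non-monochromatic gadget contributes at least... ). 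Concretely: if some gadget $G_i$ were split so that no colour held a strict majority of its $pa_i$ vertices, or if two gadgets assigned to the same colour class overcrowded it, one derives more than $p-\eps n$ minority vertices, a contradiction. So each gadget is "almost monochromatic": assign gadget $G_i$ to the colour of its majority. Then I argue each colour receives gadgets whose $a_i$-values sum to exactly $s$: the total vertex count in colour $c$ is at least $p\cdot(\text{sum of }a_i\text{ over gadgets assigned to }c)$ minus the minority vertices it absorbed, and at most $(1+\eps)\lceil n/k\rceil$; combined with $\sum = ks$ over all integers and the constraint $s/4<a_i<s/2$ (which bounds how the $a_i$ can combine), this pins each colour's share to exactly three gadgets summing to $s$.

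The main obstacle I anticipate is the second direction's arithmetic: carefully propagating the "fewer than $p-\eps n$ minority vertices" bound together with the near-balance slack $(1+\eps)\lceil n/k\rceil$ to conclude that (a) every gadget is essentially undivided and (b) the majority-colour assignment yields triples of total weight \emph{exactly} $s$ rather than merely approximately $s$. Here the hypotheses $s/4<a_i<s/2$ and the integrality of the $a_i$ are essential, since they guarantee that a set of integers from the instance summing to something within a small additive window of $s$ must in fact consist of exactly three of them summing to exactly $s$; I would isolate this as the crux and verify that the chosen relationship between $\eps$, $p$, and $n$ makes the window tight enough. Finally, since the gadgets and the whole graph are constructed in time polynomial in the (unary) size of the \tp instance — using strong NP-hardness so that the $a_i$ are polynomially bounded — and $1/\eps$ is likewise polynomially bounded, a fully polynomial time algorithm $\mathcal A$ runs in polynomial time, so this whole procedure decides \tp in polynomial time.
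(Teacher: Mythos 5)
Your plan is essentially the paper's own proof: the ``yes'' direction cuts the $m$ connecting edges to witness a perfectly balanced solution of cost at most $m$, and the ``no'' direction combines the fewer-than-$p-\eps n$ minority-vertex guarantee with the majority-colour assignment, the pigeonhole on $\sum a_i=ks$, the window $s/4<a_i<s/2$, and the near-balance bound $(1+\eps)\lceil n/k\rceil$ to pin every colour class to a triple summing to exactly $s$. The only cosmetic difference is that you derive the contradiction by showing an overfull colour class would exceed $(1+\eps)\lceil n/k\rceil$, whereas the paper shows an underfull one would drop below the lower bound $n-(k-1)(1+\eps)n/k$; these are equivalent counts (and your passing remark that ``overcrowding'' itself produces minority vertices is a slip---it violates the size bound, not the minority bound---but your subsequent counting uses the correct mechanism).
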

\begin{proof}
Let $k$ be the value given by a \tp instance $I$, and let $G$ be the graph
corresponding to $I$ in the reduction set. Assume that $I$ has a
solution. Then obviously cutting the $m$ edges connecting the gadgets of~$G$
gives a perfectly balanced solution to $I$. Hence in this case the
optimal solution has cut size at most $m$. Accordingly algorithm
$\mc{A}$ will cut at most~$\alpha m$ edges since it approximates the cut
size by a factor of $\alpha$. We will show that in the other case when $I$
does not have a solution, the algorithm cuts more than $\alpha m$
edges. Hence $\mc{A}$ can decide the \tp problem and the lemma follows.

For the sake of deriving a contradiction, assume that algorithm $\mc{A}$ cuts at
most $\alpha m$ edges in case the \tp instance $I$ does not permit a solution.
Since the corresponding graph $G$ is from a reduction set for \kbp, by
Definition~\ref{dfn:red-set} this means that from its $n$ vertices, in total
less than $p-\eps n$ are minority vertices in the colouring induced by the
computed solution of $\mc{A}$. Each gadget $G_i$, where $i\in\{1,\ldots 3k\}$,
of $G$ has a \emph{majority colour}, i.e.\ a colour that more than half the
vertices in $G_i$ share. This is because the size of $G_i$ is $pa_i$ and we can
safely assume that $a_i\geq 2$ (otherwise the instance $I$ is trivial due to
$s/4<a_i<s/2$). The majority colours of the gadgets induce a partition $\mc{P}$
of the integers $a_i$ of $I$ into $k$ sets. That is, we introduce a set in
$\mc{P}$ for each colour and put an integer $a_i$ in a set if the majority
colour of $G_i$ equals the colour of the set.

Since we assume that $I$ does not admit a solution, if every set in
$\mc{P}$ contains exactly three integers there must be some set for which the
contained integers do not sum up to exactly the threshold $s$. On the
other hand the bounds on the integers, assuring that $s/4 < a_i <
s/2$ for each $i\in\{1,\ldots, 3k\}$, mean that in case not every set in
$\mc{P}$ contains exactly three elements, there must also exist a set for
which the contained numbers do not sum up to $s$. By the pigeonhole
principle and the fact that the sum over all $a_i$ equals $ks$, there
must thus be some set $T$ among the $k$ in $\mc{P}$ for which the sum of
the integers is strictly less than $s$. Since the involved numbers are
integers we can conclude that the sum of the integers in $T$ is in fact at
most $s-1$. Therefore the number of vertices in the gadgets corresponding
to the integers in $T$ is at most $p(s-1)$. Let w.l.o.g.\ the
colour of $T$ be $1$. Apart from the vertices in these gadgets having
majority colour~$1$, all vertices in $G$ that also have colour~$1$ must be
minority vertices. Hence there must be less than $p(s-1)+p-\eps n$
many vertices with colour~$1$. Since $\sum_{i=1}^{3k} a_i=ks$ and
thus~$p s=n/k$, these are less than $n/k-\eps n$.

At the same time the algorithm computes a solution inducing a colouring in which
each colour has at most $(1+\eps)n/k$ vertices, since $n=p ks$ is
divisible by $k$. This means we can give a lower bound of $n-(k-1)(1+\eps)n/k$
on the number of vertices of a colour by assuming that all other colours have
the maximum number of vertices. Since this lower bound equals $(1+\eps)n/k-\eps
n$, for any $\eps\geq 0$ we get a contradiction on the upper bound derived above
for colour $1$. Thus the assumption that the algorithm cuts less than $\alpha
m$ edges if~$I$ does not have a solution is wrong.
\end{proof}

\ignore{
For any \tp instance we construct a set of $3k$ graphs, which we will call
\emph{gadgets}, with a number of vertices proportional to the integers $a_1$ to
$a_{3k}$. In particular, for general graphs each gadget $G_i$, where
$i\in\{1,\ldots 3k\}$, is a connected graph on $(\eps n+1) a_i$ vertices, where
$n$ is the total number of vertices in all gadgets. Clearly we need to assume
that $\eps$ is sufficiently small in order for such a set of gadgets to exist.
This can be achieved by letting $\eps=(2ks)^{-1}$ since the total number of
vertices is $n=\sum_{i=1}^{3k} (\eps n+1)a_i=(\eps n+1) ks$. Solving this for
$n$ and setting the value of $\eps$ accordingly gives $n=2ks$. At the same time
this assures that the gadgets can be constructed in polynomial time since \tp is
strongly NP-hard. For the same reason a fully polynomial time algorithm for \kbp
will have polynomial runtime in the size of the \tp instance for this value of
$\eps$. We will show that such an algorithm can decide the \tp problem for any
approximation ratio $\alpha$ on the cut size.

In general we will assume that we can construct $3k$ gadgets for the given graph
class such that each gadget has $pa_i$ vertices for some parameter $p$ specific
for the graph class. These gadgets will then be connected using some number $m$
of edges. The parameters $p$ and $m$ may depend on the values of the given \tp
instance. For the case of general graphs $m=0$, i.e.\ the gadgets are
disconnected. In order to show that the given gadgets can be used in a
reduction, we will assume that they have the property that an upper bound can be
given on the number of vertices that can be cut out using a limited number of
edges. More precisely given any colouring of the vertices of all gadgets into
$k$ colours, by a \emph{minority vertex} in a gadget $G_i$ we mean a vertex that
has the same colour as less than half of $G_i$'s vertices. Any partition of the
vertices of all gadgets into $k$ sets induces a colouring of the vertices into
$k$ colours. The property we need is that using at most $\alpha m$ edges, for
some approximation ratio $\alpha$, to cut the graph into $k$ sets produces less
than $p-\eps n$ minority vertices in total. For general graphs this is easily
established since $\alpha m=0$ for any finite $\alpha$ and hence no gadget can
be cut in this case. The following definition formalises the properties needed
for our reductions.

\begin{dfn}\label{dfn:red-set}
A \emph{reduction set for \kbp} contains a graph for every instance
of \tp, for some given parameters $m$, $p$, $\eps$, and $\alpha$ which may
depend on the values of the instances. Given such an instance the corresponding
graph constitutes $3k$ gadgets connected through $m$ edges such that a gadget
$G_i$, where $i\in\{1,\ldots,3k\}$, has $pa_i$ vertices. Additionally if the
cut size of a partition of the $n$ vertices in such a graph into $k$ sets is at
most $\alpha m$, then in total there are less than $p-\eps n$ minority vertices
in the induced colouring.
\end{dfn}

Obviously the involved parameters have to be set to appropriate values in order
for the reduction set to exist. However the values will vary with the considered
graph class and we therefore fix them only later. The following lemma assures
that given a reduction set an algorithm for \kbp computing near-balanced
partitions and approximating the cut size within $\alpha$ can decide the \tp
problem. For general graphs we have seen above that a reduction set exists for
any finite $\alpha$. This means that a fully polynomial time algorithm for \kbp
approximating the cut size within $\alpha$ can decide the \tp problem within a
polynomial time in the size of a \tp instance. Such an algorithm can however not
exist unless P=NP.

\begin{lem}\label{lem:reduction}
Let for $\eps\geq 0$ an algorithm be given that computes a partition of the
vertices for any graph in a reduction set for \kbp into $k$ sets of size
at most $(1+\eps)\lceil n/k \rceil$ each. If the cut size of the computed
solution deviates by at most $\alpha$ from the optimal cut size of a perfectly
balanced solution, then the algorithm can decide the \tp problem.
\end{lem}
\begin{proof}
Let $k$ be the value given by the \tp instance $I$ corresponding to a graph $G$
in the reduction set. Assume that $I$ has a solution. Then obviously cutting the
$m$ edges connecting the gadgets of $G$ gives a perfectly balanced solution to
$I$. Hence the optimal solution in this case has a cut size of at most $m$.
Accordingly the algorithm which approximates the cut size by a factor of
$\alpha$ will cut at most $\alpha m$ edges. We will show that in the other case
when $I$ does not have a solution, the algorithm will cut more than $\alpha m$
edges. Hence the algorithm can decide the \tp problem and thus the lemma
follows.

For the sake of deriving a contradiction assume that the algorithm cuts at most
$\alpha m$ edges in case the \tp instance $I$ corresponding to the given graph
$G$ does not permit a solution. Since $G$ is from a reduction set for \kbp, by
Definition~\ref{dfn:red-set} this means that from its $n$ vertices, in
total less than $p-\eps n$ are minority vertices in the colouring induced by the
computed solution of the algorithm. In particular this means that each gadget
$G_i$, where $i\in\{1,\ldots 3k\}$, of $G$ has a \emph{majority colour}, i.e.\ a
colour that at least half the vertices in $G_i$ share. This is because the size
of $G_i$ is $pa_i$ and we can safely assume that $a_i\geq 2$ (otherwise the \tp
instance is trivial). The majority colours of the gadgets induce a partition
$\mc{P}$ of the integers $a_i$ of $I$ into $k$ sets. That is, we introduce a set
in $\mc{P}$ for each colour and put an integer $a_i$ in a set if the majority
colour of $G_i$ corresponds the colour of the set.

Since we assume that $I$ does not admit a solution, if all sets in $\mc{P}$
contain exactly three integers there must be some sets for which the
contained integers do not sum up to exactly the threshold $s$. On the other hand
the bounds on the integers, assuring that $s/4 < a_i < s/2$ for each
$i\in\{1,\ldots, 3k\}$, mean that in case not all sets in $\mc{P}$ contain
exactly three elements, there must also exist sets for which the contained
numbers do not sum up to $s$. By the pigeonhole principle and the fact that the
sum over all $a_i$ equals $ks$, there must thus be some set $T$ among the $k$ in
$\mc{P}$ for which the sum of the integers is strictly less than $s$. Since the
involved numbers are integers we can conclude that the sum of the integers in
$T$ is in fact at most $s-1$. Therefore the number of vertices in the gadgets
corresponding to the integers in $T$ is at most $p(s-1)$. Let w.l.o.g.\ the
colour of $T$ be $1$. Apart from the vertices in these gadgets having
majority colour $1$, all vertices in $G$ that also have colour $1$ must be
minority vertices. Hence there must be less than $p(s-1)+p-\eps n$ many
vertices with colour $1$. Since $\sum_{i=1}^{3k} a_i=ks$ and thus $ps=n/k$,
these are less than $n/k-\eps n$.

At the same time the algorithm computes a solution inducing a colouring in which
each colour has at most $(1+\eps)n/k$ vertices, since $n$ is divisible by $k$.
This means we can give a lower bound of $n-(k-1)(1+\eps)n/k$ on the number of
vertices of a colour by assuming that all other colours have the maximum number
of vertices. Since this lower bound equals $(1+\eps)n/k-\eps n$, for any
$\eps\geq 0$ we get a contradiction on the upper bound derived above for colour
$1$. Thus the assumption that the algorithm cuts less than $\alpha m$ edges if
$I$ does not have a solution is wrong.
\end{proof}
}

\section{Consequences for Grids and Trees}

We will now consider solid grid graphs and trees to show the hardness of the
\kbp problem when restricted to these. For grids we establish our results
by considering a set of \emph{rectangular} grid graphs which are connected in a
row (Figure~\ref{fig:reduction}). By a rectangular grid graph we mean the
Cartesian product of two paths. That is, given two paths $P_1$ and $P_2$ with
vertex sets $V_1$ and $V_2$ respectively, their Cartesian product is a solid
grid graph with vertex set $V_1\times V_2$. Two vertices $(v_1,v_2)$ and
$(w_1,w_2)$ of the grid are adjacent if for some $i\in\{1,2\}$ there is an edge
between $v_i$ and $w_i$ in the path~$P_i$, while $v_j=w_j$ for the other index
$j\in\{1,2\}\setminus\{i\}$. Consider the natural planar embedding of a grid
graph where vertices are coordinates in $\mathbb{N}^2$ and edges have unit
length. The \emph{width} of a rectangular grid graph is the number of
vertices sharing the same $y$-coordinate in this embedding. Accordingly the
\emph{height} is the number sharing the same $x$-coordinate. We first prove that
such topologies can be used for reduction sets. We satisfy the conditions by
observing that a grid graph resembles a discretised polygon and hence shares
their isoperimetric properties. This fact was already used
in~\cite{PapadimitriouS96} and we harness these results in the following lemma.

\begin{figure}
\centering
\includegraphics[width=\textwidth]{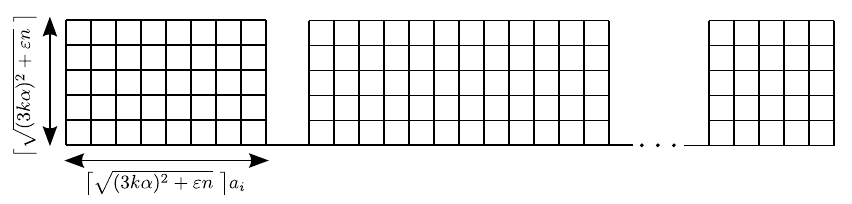} 
\caption{The solid grid constructed for the reduction from \tp. The gadgets
which are rectangular grids are connected through the bottom left and right
vertices.}
\label{fig:reduction}
\end{figure}

\begin{lem}\label{lem:grid-red-set}
Let $\eps\geq 0$ and $\alpha\geq 1$. For any \tp instance, let a solid grid
graph $G$ with $n$ vertices be given that consists of $3k$ rectangular grids
which are connected in a row using their lower left and lower right vertices by
$m=3k-1$ edges. Moreover let the height and width of a rectangular grid $G_i$,
where $i\in\{1,\ldots 3k\}$, be $\bigl\lceil\sqrt{(3k\alpha)^2+\eps n}\
\bigr\rceil$ and $\bigl\lceil\sqrt{(3k\alpha)^2+\eps n}\ \bigr\rceil a_i$,
respectively. If~$\eps$ and $\alpha$ are values for which these grids exist,
they form a reduction set for \kbp.
\end{lem}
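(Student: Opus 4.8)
The plan is to verify Definition~\ref{dfn:red-set} directly, taking $m=3k-1$, $p:=\bigl\lceil\sqrt{(3k\alpha)^{2}+\eps n}\,\bigr\rceil^{2}$, and the given $\eps,\alpha$; write $h:=\bigl\lceil\sqrt{(3k\alpha)^{2}+\eps n}\,\bigr\rceil$, so $p=h^{2}$. The structural part will be immediate: each gadget $G_i$ is a rectangular grid of height $h$ and width $ha_i$, hence has exactly $h\cdot ha_i=pa_i$ vertices; there are $3k$ gadgets joined by $m=3k-1$ edges; and $p$ is a perfect square, hence an integer. We may assume $a_i\ge2$ for all $i$, as otherwise the \tp instance $I$ is trivial. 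So the work is to show that any partition of the $n$ vertices into $k$ sets whose cut size is at most $\alpha m$ induces, in total, fewer than $p-\eps n$ minority vertices.

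Fix such a partition and let $c_i$ be the number of cut edges with both endpoints in $G_i$. Then $\sum_{i=1}^{3k}c_i\le\alpha m=\alpha(3k-1)<3k\alpha\le h$, so $c_i<h\le ha_i$ for every $i$; that is, $c_i$ is smaller than both side lengths of $G_i$. The tool is the isoperimetric property of rectangular grids used in~\cite{PapadimitriouS96}: if a vertex set $S$ of a rectangular grid has edge boundary smaller than both side lengths, then $\min\{|S|,|\bar S|\}\le\bigl\lceil|\partial S|/2\bigr\rceil^{2}$, the extremal configuration being a square placed in a corner. For any colour $\chi$, the set $S_{i,\chi}$ of $\chi$-coloured vertices of $G_i$ has edge boundary within $G_i$ at most $c_i$, since each such boundary edge joins two differently coloured vertices of $G_i$ and is therefore one of the $c_i$ cut edges; hence $\min\{|S_{i,\chi}|,\,|G_i|-|S_{i,\chi}|\}\le\lceil c_i/2\rceil^{2}\le h^{2}/4$.

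Next I would show that every gadget $G_i$ has a \emph{majority colour}: a colour class of size more than $|G_i|/2=h^{2}a_i/2\ge h^{2}$. Suppose not. Then every colour class of $G_i$ has size at most $|G_i|/2$, hence, by the bound just displayed, size at most $\lceil c_i/2\rceil^{2}\le h^{2}/4$. Adding the colour classes of $G_i$ one at a time until the running total first exceeds $h^{2}/4$ — possible since the total is $|G_i|\ge2h^{2}$ — yields a set $U$ with $h^{2}/4<|U|\le h^{2}/4+h^{2}/4=h^{2}/2$, so also $|G_i|-|U|\ge2h^{2}-h^{2}/2>h^{2}/4$. The edge boundary of $U$ within $G_i$ again consists of cut edges, so it is at most $c_i<h$, and the isoperimetric property then forces $\min\{|U|,|G_i|-|U|\}\le\lceil c_i/2\rceil^{2}\le h^{2}/4$, contradicting $\min\{|U|,|G_i|-|U|\}>h^{2}/4$. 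Hence a (necessarily unique) majority colour $\chi_i$ exists. The minority vertices of $G_i$ are then exactly those outside $S_{i,\chi_i}$, and applying the isoperimetric property to $S_{i,\chi_i}$ — whose complement is the smaller of the two sides, since $|S_{i,\chi_i}|>|G_i|/2$ — the number of minority vertices in $G_i$ equals $|G_i|-|S_{i,\chi_i}|\le\lceil c_i/2\rceil^{2}\le c_i^{2}$.

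Summing over all gadgets, the total number of minority vertices is at most $\sum_{i=1}^{3k}c_i^{2}\le\bigl(\sum_{i=1}^{3k}c_i\bigr)^{2}\le\bigl(\alpha(3k-1)\bigr)^{2}<(3k\alpha)^{2}\le h^{2}-\eps n=p-\eps n$, the last inequality using $h\ge\sqrt{(3k\alpha)^{2}+\eps n}$. This is exactly the remaining requirement of Definition~\ref{dfn:red-set}, so the described grids form a reduction set for \kbp. I expect the crux to be the third paragraph, establishing a majority colour in each gadget: this is precisely where the choice $h>3k\alpha\ge\alpha m$ is needed, because it guarantees that no cut of size at most $\alpha m$ can run across a gadget, and where the isoperimetric geometry of grids does the real work. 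The only fiddly point is pinning down the exact form (ceilings, constant) of the isoperimetric inequality of~\cite{PapadimitriouS96}; since the final estimate has a factor-$4$ of slack, any inequality of the shape $\min\{|S|,|\bar S|\}=O(|\partial S|^{2})$ with constant at most $1$ — which already holds for the corner square, with constant $\tfrac14$ — would suffice.
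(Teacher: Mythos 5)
Your proof is correct and follows essentially the same route as the paper's: both rest on the Papadimitriou--Sideri isoperimetric bound for rectangular grids (the corner-square extremal configuration), the observation that each gadget's height and width exceed $\alpha m$ so no cut of that size can run across a gadget, and the final estimate that at most $(\alpha m/2)^2<(3k\alpha)^2\le p-\eps n$ vertices can be separated in total. You are in fact somewhat more careful than the paper's one-paragraph argument, which asserts the bound on minority vertices directly, whereas you explicitly establish the existence of a majority colour in each gadget before counting.
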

\begin{proof}
Consider one of the described graphs $G$ for a \tp instance. Since both the
height and the width of each rectangular grid $G_i$ is greater than $\alpha m$,
using at most $\alpha m$ edges it is not possible to cut across a gadget~$G_i$,
neither in horizontal nor in vertical direction.
Due to~\cite[Lemma~2]{PapadimitriouS96} it follows that with this limited amount
of edges, the maximum number of vertices can be cut out from the gadgets by
using a square shaped cut in one corner of a single gadget. Such a cut will cut
out at most $(\alpha m/2)^2$ vertices. Hence if the vertices of the grid graph
$G$ are cut into $k$ sets using at most $\alpha m$ edges, then the induced
colouring contains at most $(\alpha m/2)^2$ minority vertices in total. Since
the size of each gadget is its height times its width, the parameter $p$ is
greater than $(\alpha m)^2+\eps n$. Hence the number of minority vertices is
less than~$p-\eps n$.
\end{proof}

The above topology is first used in the following theorem to show that no
satisfying fully polynomial time algorithm exists.

\begin{thm}\label{thm:Fpoly-grids}
Unless P=NP, there is no fully polynomial time algorithm for the \kbp problem on
solid grid graphs that for any~$\eps>0$ computes a solution in which each set
has size at most $(1+\eps)\lceil n/k \rceil$ and where $\alpha=n^c/\eps^d$, for
any constants $c$ and $d$ where $c<1/2$.
\end{thm}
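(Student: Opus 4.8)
The plan is to combine Lemma~\ref{lem:reduction} with Lemma~\ref{lem:grid-red-set} and derive a contradiction from the assumed existence of a fast near-balanced algorithm. Suppose for contradiction that such a fully polynomial time algorithm $\mc{A}$ exists with $\alpha=n^c/\eps^d$ for some constants $c<1/2$ and $d\geq 0$. I would take an arbitrary \tp instance $I$ and aim to build, from $I$, a solid grid graph $G$ of the form described in Lemma~\ref{lem:grid-red-set}, so that $\mc{A}$ run on $G$ decides whether $I$ has a solution. By Lemma~\ref{lem:reduction} it suffices to ensure that $G$ together with the parameters $p$, $m=3k-1$, $\eps$, $\alpha$ forms a reduction set, which by Lemma~\ref{lem:grid-red-set} amounts to choosing $\eps$ so that the rectangular grids of the prescribed dimensions actually exist (i.e.\ all the ceilings and products are consistent with $n=\sum_i pa_i$).

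The crux is a fixed-point choice of $\eps$ (and hence of the grid dimensions) analogous to the $\eps=(2ks)^{-1}$ choice made for general graphs. Here the side length is $h:=\bigl\lceil\sqrt{(3k\alpha)^2+\eps n}\ \bigr\rceil$, so $p=h^2$ and $n=h^2\sum_i a_i = h^2 ks$. Since $\alpha=n^c/\eps^d$ itself depends on $n$, this is a genuine self-referential system: I would set $\eps=1/(h^2 ks)=1/n$ so that $\eps n=1$, substitute $\alpha=n^c\cdot n^{d}=n^{c+d}$, and check that $h=\bigl\lceil\sqrt{(3k)^2 n^{2c+2d}+1}\ \bigr\rceil$ together with $n=h^2 ks$ has a consistent solution with $n$ polynomially bounded in the size of $I$. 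Because $c+d$ could exceed $1/2$ (only $c<1/2$ is assumed), one must verify that the recursion $n\approx (3k)^2 n^{2c+2d} ks$ still admits a polynomially bounded solution; the point is that $c$ and $d$ are \emph{constants}, so $n$ is bounded by a fixed polynomial in $k$ and $s$ regardless, and since \tp is strongly NP-hard ($k,s$ polynomial in the input size) the grid $G$ is constructible in polynomial time and $1/\eps = n$ is polynomial, so $\mc{A}$ runs in polynomial time. This existence-and-size bookkeeping is the main obstacle; everything else is plugging into the two lemmas.

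Once $\eps$ is fixed this way, Lemma~\ref{lem:grid-red-set} applies verbatim: $p=h^2\geq (3k\alpha)^2+\eps n = (\alpha m)^2 + \eps n$ by the defining ceiling, so the grids form a reduction set for \kbp with parameters $\alpha$ and $\eps$. Lemma~\ref{lem:reduction} then says that $\mc{A}$, which by hypothesis computes a partition into $k$ sets each of size at most $(1+\eps)\lceil n/k\rceil$ and approximates the optimal perfectly balanced cut within $\alpha$, decides \tp. Since this algorithm runs in time polynomial in $n/\eps = n^2$, which is polynomial in the size of $I$, we would have a polynomial time algorithm for the strongly NP-hard problem \tp, contradicting P$\neq$NP. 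Hence no such $\mc{A}$ exists, which is the statement of the theorem.
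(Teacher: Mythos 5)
Your overall architecture matches the paper's: fix $\eps$ as a function of the \tp instance, solve the self-referential equation for $n$, invoke Lemma~\ref{lem:grid-red-set} to get a reduction set, and conclude via Lemma~\ref{lem:reduction}. But your specific choice $\eps=1/n$ introduces a genuine gap that you notice and then incorrectly dismiss. With $\eps=1/n$ you get $\alpha=n^{c+d}$, so the consistency condition becomes $n=h^2ks$ with $h^2\geq (3k)^2n^{2(c+d)}+1$, i.e.\ $n>9k^3s\,n^{2(c+d)}$. Whenever $c+d\geq 1/2$ (which the theorem must cover, since $d$ is an arbitrary constant), we have $2(c+d)\geq 1$ and hence $n^{2(c+d)}\geq n$ for $n\geq 1$, forcing $1>9k^3s$ --- a contradiction. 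So for those parameter values \emph{no} consistent $n$ exists at all: the required gadget side length grows faster than $\sqrt{n}$, and the grid cannot be built. The issue is not whether $n$ is polynomially bounded (your ``$c$ and $d$ are constants, so $n$ is bounded by a fixed polynomial regardless'' does not address it); it is that the fixed-point equation has no solution $\geq 1$.

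The paper avoids this by choosing $\eps=(2ks)^{-1}$, which depends only on the instance, not on $n$. Then $\alpha=n^c(2ks)^d$, so the constant $d$ enters only the \emph{base} of the recursion $f(n)=9k^3s(2ks)^{2d}n^{2c}+n/2$, not the exponent of $n$. Since $c<1/2$, this $f$ is concave with a nonzero fixed point $n_1=(18k^3s(2ks)^{2d})^{1/(1-2c)}$, which is polynomial in $k$ and $s$ because $1/(1-2c)$ is a constant and \tp is strongly NP-hard; the paper then does a small additional argument (using concavity and monotonicity of $f$) to show the equation \emph{with} the ceiling also has an exact integer solution $u$, a detail your sketch also glosses over. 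To repair your proof, replace $\eps=1/n$ by any instance-determined $\eps$ with $1/\eps$ polynomially bounded, e.g.\ $\eps=(2ks)^{-1}$, and redo the fixed-point analysis with $d$ confined to the multiplicative constant.
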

\begin{proof}
In order to prove the claim we need to show that a reduction set as suggested by
Lemma~\ref{lem:grid-red-set} exists and can be constructed in polynomial time.
We first prove the existence by showing that the construction given by
Lemma~\ref{lem:grid-red-set} is feasible for any \tp instance. Since $\alpha$
and therefore the sizes of the gadgets depend on $n$, we need to determine the
number of vertices $n$ prior to the construction of the grid. The algorithm for
\kbp can compute a near-balanced partition for any $\eps>0$. Hence we can set
$\eps=(2ks)^{-1}$ so that $\alpha=n^c(2ks)^d$. For a solid grid graph as
suggested by Lemma~\ref{lem:grid-red-set} the parameter $p$ is determined by the
width and height of the gadgets. Hence $p=\bigl\lceil\sqrt{(3k\alpha)^2+\eps n}\
\bigr\rceil^2$ and the number of vertices is 
\begin{equation}\label{eq:n1}
n=\sum_{i=1}^{3k} pa_i=\left\lceil\sqrt{(3kn^c(2ks)^d)^2+n/(2ks)}\right\rceil^2
\cdot ks.
\end{equation}

Determining the non-zero solution for $n$ in this equation will give us the
number of vertices. For this we analyse the right-hand side of the equation as a
function of $n$ while ignoring the ceiling function. That is, we are interested
in the function
$$
f(n)=\left(\sqrt{(3kn^c(2ks)^d)^2+n/(2ks)}\right)^2\cdot ks=9k^3s(2ks)^{2d}
n^{2c}+n/2.
$$
It is easy to see that the points for which $f(n)=n$ are $n_0=0$ and
$n_1=(18k^3s(2ks)^{2d})^{{1}/(1-2c)}$. Setting $n$ to $n_1$ in the right-hand
side of Equation~\eqref{eq:n1} gives an upper bound $u\in\mathbb{N}$ on
$f(n_1)$,
which is integer valued due to the ceiling function and the
fact that $k,s\in\mathbb{N}$. Let $n_2$ be the value for which $f(n_2)=u$, which
is well-defined and greater than $n_1$ since $f(n)$ is strictly increasing. Note
that since $c<1/2$ the second derivative of the function $f(n)$ is negative.
That is, $f(n)$ is concave which, by definition of $n_0$ and $n_1$, means that
$f(n)\geq n$ for any $n\in[n_0,n_1]$ and also $f(n)\leq n$ for any
$n\notin[n_0,n_1]$. Consequently $f(n_2)\leq n_2$ from which we can conclude
that
$$
n_1 = f(n_1)\leq u=f(n_2)\leq n_2.
$$
Hence $u\in[n_1,n_2]$. Since $f(n)$ is strictly increasing, the right-hand
side of Equation~\eqref{eq:n1} equals $u$ for any $n\in[n_1,n_2]$. This means
that the non-zero solution to Equation~\eqref{eq:n1} is the value $u$.

By first calculating $u$, the construction of Lemma~\ref{lem:grid-red-set} can
be carried out given a \tp instance $I$.
Since $c$ and $d$ are constants and \tp is strongly NP-hard, $n_1$ is
polynomially bounded in the size of $I$. Clearly~$u$ is therefore also
polynomially bounded. Hence a grid graph as suggested by
Lemma~\ref{lem:grid-red-set} can be constructed in polynomial time. For
$\eps=(2ks)^{-1}$ a fully polynomial time algorithm has a running time that is
polynomial in the size of the instance $I$ when executed on the corresponding
grid. However, unless P=NP, this algorithm cannot exist since it decides the \tp
problem due to Lemma~\ref{lem:reduction}.
\end{proof}

Next we consider computing perfectly balanced partitions. Note that we may set
$\eps=0$ in Lemma~\ref{lem:reduction} for this purpose. The following theorem
shows the NP-hardness of \kbp on solid grids. It therefore considers running
times that are polynomial solely in the number of vertices and do not depend on
some input parameter $\eps$.

\begin{thm}\label{thm:NP-grids}
There is no polynomial time algorithm for the \kbp problem on solid grid graphs
that approximates the cut size within $\alpha=n^c$ for any constant $c<1/2$,
unless P=NP.
\end{thm}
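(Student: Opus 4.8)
The plan is to specialise the argument of Theorem~\ref{thm:Fpoly-grids} to the case $\eps=0$, which is in fact the simpler situation. By Lemma~\ref{lem:reduction} with $\eps=0$ it suffices to exhibit, for every \tp instance $I$, a reduction set for \kbp with $\alpha=n^c$ that can be built in polynomial time: a polynomial time algorithm approximating the cut size within $n^c$ would then, applied to the constructed grid, decide the strongly NP-hard problem \tp in polynomial time, forcing P${}={}$NP. So first I would invoke Lemma~\ref{lem:grid-red-set} with $\eps=0$. The reduction set then consists of $3k$ rectangular grids of height $\bigl\lceil\sqrt{(3k\alpha)^2}\,\bigr\rceil=\lceil 3kn^c\rceil$ and width $\lceil 3kn^c\rceil a_i$, joined in a row by $m=3k-1$ edges; such grids plainly exist for every $\alpha\ge 1$. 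The only thing to sort out is the circular dependency between the gadget dimensions and the total vertex count, which is governed by the fixed-point equation
\begin{equation*}
n=\sum_{i=1}^{3k}p a_i=\bigl\lceil 3kn^c\bigr\rceil^2\cdot ks ,\qquad\text{where }p=\bigl\lceil 3kn^c\bigr\rceil^2 .
\end{equation*}

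Next I would resolve this equation exactly as in Theorem~\ref{thm:Fpoly-grids}, only now the summand $n/(2ks)$ under the root of Equation~\eqref{eq:n1} is absent. Ignoring the ceiling, set $f(n)=(3kn^c)^2ks=9k^3sn^{2c}$. Since $c<1/2$ the second derivative of $f$ is negative, so $f$ is concave with $f(n)=n$ exactly at $n_0=0$ and $n_1=(9k^3s)^{1/(1-2c)}$. Evaluating the right-hand side of the fixed-point equation at $n_1$ yields an integer $u\ge f(n_1)=n_1$ (a squared ceiling times $ks\in\mathbb{N}$). Letting $n_2$ be the point with $f(n_2)=u$, concavity gives $n_1=f(n_1)\le u=f(n_2)\le n_2$, so $u\in[n_1,n_2]$. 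On $[n_1,n_2]$ the right-hand side is non-decreasing in $n$ and equals $u$ at both endpoints — at $n_2$ because $u/(ks)$ is a perfect square, so the ceiling in the right-hand side is attained exactly there — hence it equals $u$ throughout. Therefore $n=u$ is the non-zero solution, and plugging it back in gives a grid with exactly $u$ vertices. Because $c$ is a constant and \tp is strongly NP-hard, $n_1$, and therefore $u=\lceil 3kn_1^c\rceil^2ks$, is polynomially bounded in $|I|$; so the grid is constructible in polynomial time, the assumed algorithm runs on it in polynomial time, and Lemma~\ref{lem:reduction} with $\eps=0$ makes it decide \tp, contradicting P${}\neq{}$NP.

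The main obstacle, as in Theorem~\ref{thm:Fpoly-grids}, is purely the bookkeeping around this fixed-point equation: establishing that a non-zero solution exists, that it is pinned down despite the ceiling to the single value $u$, and that $u$ is polynomially bounded. Dropping both the $\eps n$ term and (effectively) the exponent $d$ makes this strictly easier than in the fully polynomial time case, so I expect no genuine difficulty beyond re-using the concavity computation; the rest is a direct appeal to Lemmas~\ref{lem:grid-red-set} and~\ref{lem:reduction}.
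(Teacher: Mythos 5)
Your proposal is correct and follows essentially the same route as the paper: invoke Lemma~\ref{lem:grid-red-set} with $\eps=0$, resolve the fixed-point equation $n=\lceil 3kn^c\rceil^2 ks$ via the concave function $f(n)=9k^3sn^{2c}$ and its non-zero fixed point $n_1=(9k^3s)^{1/(1-2c)}$, and conclude via Lemma~\ref{lem:reduction} and the strong NP-hardness of \tp. Your added remark that the ceiling is attained exactly at $n_2$ because $u/(ks)$ is a perfect square is a correct (and slightly more explicit) justification of a step the paper delegates to the proof of Theorem~\ref{thm:Fpoly-grids}.
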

\begin{proof}
We first need to show that a reduction set as proposed in
Lemma~\ref{lem:grid-red-set} exists in order to use it together with
Lemma~\ref{lem:reduction}. To prove the existence we determine the number of
vertices of a grid as suggested by Lemma~\ref{lem:grid-red-set}. If this number
is finite the construction of Lemma~\ref{lem:grid-red-set} is feasible. Since
the balance of the solution is not to be approximated we set $\eps=0$. The
parameter $p$ is determined by the height and width of the gadgets which in this
case are $\lceil 3k\alpha\rceil$ and $\lceil 3k\alpha\rceil a_i$, respectively,
for a gadget $G_i$. Hence the number of vertices of the resulting grid is
\begin{equation}\label{eq:n2}
n=\sum_{i=1}^{3k} pa_i\leq\lceil 3kn^c\rceil^2\cdot ks.
\end{equation}

Similar to the proof of Theorem~\ref{thm:Fpoly-grids}, the non-zero solution of
$n$ in this equation can be determined by considering the function
$f(n)=9k^3sn^{2c}$, i.e.\ the right-hand side of Equation~\eqref{eq:n2} ignoring
the ceiling function. The non-zero point $n_1$ at which $f(n)=n$ is
$n_1=(9k^3s)^{1/(1-2c)}$. The function $f(n)$ is again strictly increasing and
concave if $c<1/2$. Hence the same arguments as in the proof of
Theorem~\ref{thm:Fpoly-grids} can be used to show that the number of vertices is
the integer value $u$ obtained by setting $n$ to $n_1$ in the right-hand side of
Equation~\eqref{eq:n2}. Additionally it is polynomial in the size of the \tp
instance since $c$ is a constant and \tp is strongly NP-hard. Hence the grids
can be constructed in polynomial time given the value $u$ and the integers of a
\tp instance. By Lemma~\ref{lem:reduction}, a polynomial time algorithm which
computes a perfectly balanced partition on any grid given by the reduction set,
and which approximates the cut size within $\alpha$, can decide the \tp problem
in polynomial time. This gives a contradiction unless P=NP, which concludes the
proof.
\end{proof}

Lemma~\ref{lem:grid-red-set} shows that for solid grid graphs the hardness
derives from their isoperimetric properties. Trees do not experience such
qualities. However they may have high vertex degrees, which grids cannot. The
following theorem shows that this property also leads to a similar hardness as
for solid grid graphs. %

\begin{figure}
\centering
\includegraphics[width=0.8\textwidth]{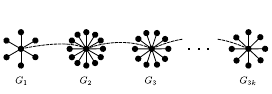} 
\caption{The tree constructed for the reduction from \tp. The gadgets
which are stars are connected through their centre vertices.}
\label{fig:tree-reduction}
\end{figure}

\begin{thm}\label{thm:trees}
Unless P=NP, there is no fully polynomial time algorithm for the \kbp problem on
trees that for any $\eps>0$ computes a solution in which each set has size at
most $(1+\eps)\lceil n/k \rceil$ and where $\alpha=n^c/\eps^d$, for any
constants $c$ and $d$ where $c<1$.
\end{thm}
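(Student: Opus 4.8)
I would follow the template of Theorem~\ref{thm:Fpoly-grids} almost verbatim, the only genuinely new ingredient being a tree analogue of Lemma~\ref{lem:grid-red-set} in which the rectangular grids are replaced by \emph{stars} and the planar isoperimetric estimate is replaced by the (much simpler) fact that detaching leaves from a star is expensive. Concretely, the plan is first to prove the following: for every \tp instance with integers $a_1,\dots,a_{3k}$ and threshold $s$, and for parameters $\eps\geq 0$ and $\alpha\geq 1$, let $G$ be the tree obtained from $3k$ stars $G_1,\dots,G_{3k}$, where $G_i$ has $pa_i$ vertices with $p=\bigl\lceil 3k\alpha+\eps n\bigr\rceil$, by joining their centres in a row with $m=3k-1$ edges (Figure~\ref{fig:tree-reduction}); then, whenever these stars exist, $G$ forms a reduction set for \kbp in the sense of Definition~\ref{dfn:red-set}. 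Note $p$ is a positive integer, as required, and $p$ is linear (not quadratic) in $\alpha$ — this is exactly what will let the theorem reach $c<1$ rather than only $c<1/2$.

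To verify Definition~\ref{dfn:red-set} for this construction I would argue as follows. Consider any partition of $V(G)$ into $k$ sets with cut size at most $\alpha m$, and let $t_i$ be the number of cut edges lying inside the star $G_i$, so that $\sum_i t_i\leq\alpha m$. Removing these $t_i$ edges leaves the centre of $G_i$ attached to at least $pa_i-t_i$ vertices of $G_i$ in one connected piece, hence all in one set of the partition. Since we may assume $a_i\geq 2$ (otherwise $I$ is trivial, as in the proof of Lemma~\ref{lem:reduction}) and $t_i\leq\sum_j t_j\leq\alpha m<3k\alpha\leq p\leq pa_i/2$, this piece has strictly more than $pa_i/2$ vertices, so the centre's colour is the majority colour of $G_i$ and every minority vertex of $G_i$ is one of the at most $t_i$ leaves that got detached from the centre. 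Summing over $i$, the total number of minority vertices is at most $\sum_i t_i\leq\alpha m<3k\alpha\leq p-\eps n$, which is what Definition~\ref{dfn:red-set} demands. This is the only place the tree structure (namely the high degree of the star centre) is used.

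With this lemma available, the theorem's proof duplicates that of Theorem~\ref{thm:Fpoly-grids}. Since the hypothetical algorithm works for every $\eps>0$, set $\eps=(2ks)^{-1}$, so $\alpha=n^{c}(2ks)^{d}$ and the total number of vertices satisfies
\[
n=\sum_{i=1}^{3k}pa_i=\Bigl\lceil 3kn^{c}(2ks)^{d}+\tfrac{n}{2ks}\Bigr\rceil\cdot ks .
\]
Dropping the ceiling gives $f(n)=3k^{2}s(2ks)^{d}n^{c}+n/2$, whose non-zero fixed point is $n_1=\bigl(6k^{2}s(2ks)^{d}\bigr)^{1/(1-c)}$; this is where $c<1$ enters, since it makes $f$ strictly increasing and concave (we may assume $c>0$, as for larger $c$ the ratio $n^{c}/\eps^{d}$ only grows, so hardness propagates downward — exactly the situation in Theorem~\ref{thm:Fpoly-grids}). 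The interval argument from Theorem~\ref{thm:Fpoly-grids} then shows the ceiling equation has a unique non-zero solution $u$ with $n_1\leq u\leq n_2$ where $f(n_2)=u$, and $u$ is an integer polynomially bounded in the size of $I$ because $c,d$ are constants and \tp is strongly NP-hard. Hence, after computing $u$, the tree $G$ is built in polynomial time from $I$, and for $\eps=(2ks)^{-1}$ a fully polynomial time algorithm runs in time polynomial in the size of $I$ on $G$; by the tree reduction-set lemma together with Lemma~\ref{lem:reduction} such an algorithm would decide \tp, contradicting P$\neq$NP.

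I do not anticipate a serious obstacle. The combinatorial heart — the star reduction-set argument — is one short paragraph, and the fixed-point/concavity bookkeeping is routine once Theorem~\ref{thm:Fpoly-grids} is in hand. The only point I would check carefully is that the linear relation $p=\Theta(3k\alpha+\eps n)$, in place of the grids' quadratic $p=\Theta\bigl((3k\alpha)^{2}+\eps n\bigr)$, is precisely what upgrades the admissible exponent from $c<1/2$ to $c<1$, i.e.\ that the concavity of $f$ requires only $c<1$ here.
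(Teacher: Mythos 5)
Your proposal is correct and matches the paper's proof essentially step for step: the same star gadgets with $p=\lceil 3k\alpha+\eps n\rceil$ joined by $m=3k-1$ edges through their centres, the same counting of detached leaves to bound the minority vertices by $\alpha m<3k\alpha\leq p-\eps n$, and the same fixed-point analysis of $f(n)=3k^2s(2ks)^dn^c+n/2$ with $\eps=(2ks)^{-1}$. The only cosmetic difference is that you package the reduction-set argument as a separate lemma, whereas the paper folds it into the theorem's proof.
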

\begin{proof}
We need to identify a reduction set for \kbp containing trees. We use very
similar gadgets to those used in~\cite[Theorem~2]{AFFoschini12}, despite the
fact that the proof idea in the latter paper is different from the one employed
in Lemma~\ref{lem:reduction}. Each gadget of a tree in our reduction set is a
star (Figure~\ref{fig:tree-reduction}) and these are connected in a path through
their centre vertices using $m=3k-1$ edges. The number of vertices in each star
$G_i$ is $pa_i$ where $p=\lceil 3k\alpha+\eps n\rceil$. Using at most $\alpha
m$, i.e.\ less than $3k\alpha$, edges to cut off vertices from a single star,
less than $3k\alpha$ leaves will be cut off. At the same time more than half the
vertices of the star are still connected to the centre vertex. This is because
each star contains at least $6k\alpha$ vertices since we can safely assume that
$a_i\geq 2$ for each $i\in\{1,\ldots,3k\}$ (otherwise the \tp instance is
trivial due to $s/4<a_i<s/2$). Therefore partitioning the vertices of all
gadgets into $k$ sets using at most $\alpha m$ edges will in total produce less
than $3k\alpha$ minority vertices in the induced colouring. This establishes the
desired upper bound on the number of minority vertices for the reduction set
since $3k\alpha\leq p-\eps n$.

In order to prove the claim we need to show that for the given parameters a
reduction set as suggested above exists and can be constructed in polynomial
time. We first prove the existence by determining the number of vertices in a
tree of the reduction set. Since the algorithm for \kbp can compute a
near-balanced partition for any $\eps>0$ we set $\eps=(2ks)^{-1}$. Thus the
number of vertices in a tree of the reduction set is 
\begin{equation}\label{eq:n3}
n=\sum_{i=1}^{3k}p a_i =\lceil 3kn^c(2ks)^d+n/(2ks)\rceil\cdot ks.
\end{equation}

As in the proof of Theorem~\ref{thm:Fpoly-grids} we can determine the non-zero
solution of~$n$ to this equality by considering the right-hand side and ignoring
the ceiling function. This gives the function $f(n)=3k^2s(2ks)^{d}n^c+n/2$. The
non-zero point $n_1$ at which $f(n)=n$ in this case is
$n_1=(6k^2s(2ks)^d)^{1/(1-c)}$. The function $f(n)$ is strictly increasing and
concave since $c<1$. Hence, similar to the proof of
Theorem~\ref{thm:Fpoly-grids}, we can determine the number of vertices by
setting~$n$ to $n_1$ in the right-hand side of Equation~\eqref{eq:n3}. Given a
\tp instance the construction of a tree of the reduction set can thus be carried
out by first calculating the number of vertices in the tree. Additionally,
if~$c$ and~$d$ are constants such a construction can be done in polynomial time
since \tp is strongly NP-hard.

For $\eps=(2ks)^{-1}$ a fully polynomial time algorithm has a running time that
is polynomial in the \tp instance when executed on the corresponding tree in the
reduction set. However, unless P=NP, this algorithm cannot exist since it
decides the \tp problem due to Lemma~\ref{lem:reduction}.
\end{proof}
\section{Conclusions}

Are there algorithms for the \kbp problem that are both \emph{fast} and compute
\emph{near-balanced solutions}, even when allowing the cut size to increase
when $\eps$ decreases? We gave a negative answer to this question. This means
that the tradeoff between fast running time and good solution quality, as
provided by the algorithms mentioned in the introduction, is unavoidable. In
particular the running time to compute near-balanced
solutions~\cite{AFFoschini12} has to increase exponentially with $\eps$. Also
our results draw a frontier of the hardness of the problem by showing how the
cut size needs to increase with decreasing $\eps$ for fully polynomial time
algorithms. These are the first bicriteria inapproximability results for the
\kbp problem.

To show the tightness of the achieved results, for grid graphs we harness
results by Diks {\it et al.}~\cite{DiksDSV93}. They provide a polynomial time
algorithm to cut out any number of vertices using at most $\mc{O}(\sqrt{\Delta
n})$ edges from a planar graph with maximum degree $\Delta$. Since $\Delta=4$
in a grid graph, it is possible to repeatedly cut out $\lceil n/k\rceil$
vertices from the grid using $\mc{O}(k\sqrt{n})$ edges in
total. A perfectly balanced partition needs at least $k-1$ edges in a connected
graph. Hence we obtain an $\alpha\in\mc{O}(\sqrt n)$ approximation algorithm for
grids.
This shows that both the hardness results we gave for these graphs are
asymptotically tight, since the algorithm runs in (fully) polynomial time. For
trees a trivial approximation algorithm can cut all edges in the graph and
thereby obtain $\alpha=n$. This shows that also the achieved result for trees is
asymptotically tight.

We were able to show that both trees and grids experience similar hardness. This
is remarkable since these graphs have entirely different combinatorial
properties. On the other hand, it emphasises the ability of the given reduction
framework to capture a fundamental trait of the \kbp problem. It remains to be
seen what other structural properties can be harnessed for our framework, in
order to prove the hardness for entirely different graph classes. Another
interesting approach would be to take the opposite view and identify
properties of graphs that in fact lead to good approximation algorithms.

It is interesting to note though that the isoperimetric properties that we used
to establish the hardness for grid graphs, can be further exploited to show the
hardness for related graph classes. For instance a graph class that is
independent of solid grid graphs (in that neither class contains the other) are
non-solid grids of simple shape. However even for the case when both the holes
and the grids are restricted to rectangular shapes, our reduction can be used to
give similar hardness results as for solid grid graphs. This can easily be seen
by adding edges to the gadgets used in our reduction set
(Figure~\ref{fig:reduction}) to connect the top most left and right vertices,
respectively. The constants of the respective parameters can readily be adapted
to compensate for the additional edges.

Another graph class for which some additional interesting observations can be
made are $\delta$-dimensional grids. Clearly the presented results are also
valid for these since solid grid graphs are a special case. However one
can show that the hardness in fact grows with the dimension $\delta$. For
example in a $3$-dimensional cuboid shaped grid, it is easy to see that the
maximum number of vertices that can be cut out using $\alpha m$ edges is in the
order of~$(\alpha m)^{3/2}$ if the grid is large enough. Such cuboid grids can
be used as gadgets for a reduction set. This leads to hardness results where the
constant~$c$ can take values up to $2/3$, i.e.\ the inverse of the former
exponent. By exploiting the isoperimetric properties of grids in higher
dimensions accordingly, one can show that the corresponding constant~$c$ can
take values up to~$1-1/\delta$.

Note that the respective ratios $\alpha$ of the bicriteria inapproximability
results can in each case be amplified arbitrarily due to the unrestricted
constant~$d$. Also, we are able to provide reduction sets for graphs that
resemble those resulting from 2D FEMs (solid grid graphs, or grid graphs with
holes of simple shape) and even 3D FEMs ($3$-dimensional grids) which are
widely used in practice. For these reasons our results imply that completely
different methods (possibly randomness or fixed parameter tractability) must be
employed in order to find fast practical algorithms with rigorously bounded
approximation guarantees.

\bibliographystyle{abbrv}
\bibliography{partitioning}

\begin{thebibliography}{10}

\bibitem{RackeA06}
K.~Andreev and H.~{R\"acke}.
\newblock Balanced graph partitioning.
\newblock {\em Theory of Computing Systems}, 39(6):929--939, 2006.

\bibitem{ArbenzLMMS07}
P.~Arbenz, G.~van Lenthe, U.~Mennel, R.~M{\"u}ller, and M.~Sala.
\newblock Multi-level $\mu$-finite element analysis for human bone structures.
\newblock In {\em Proceedings of the 8th Workshop on State-of-the-art in
  Scientific and Parallel Computing (PARA)}, pages 240--250, 2007.

\bibitem{AroraRV04}
S.~Arora, S.~Rao, and U.~Vazirani.
\newblock Expander flows, geometric embeddings and graph partitioning.
\newblock In {\em Proceedings of the 26th annual ACM symposium on Theory of
  computing (STOC)}, pages 222--231, 2004.

\bibitem{BhattL84}
S.~Bhatt and F.~T. Leighton.
\newblock A framework for solving {VLSI} graph layout problems.
\newblock {\em Journal of Computer and System Sciences}, 28(2):300--343, 1984.

\bibitem{ChevalierP08}
C.~Chevalier and F.~Pellegrini.
\newblock {PT-Scotch}: A tool for efficient parallel graph ordering.
\newblock {\em Parallel Computing}, 34(6–8):318--331, 2008.

\bibitem{DellingGPW11}
D.~Delling, A.~Goldberg, T.~Pajor, and R.~Werneck.
\newblock Customizable route planning.
\newblock {\em Experimental Algorithms}, pages 376--387, 2011.

\bibitem{DiestelJGT99}
R.~Diestel, T.~R. Jensen, K.~Y. Gorbunov, and C.~Thomassen.
\newblock Highly connected sets and the excluded grid theorem.
\newblock {\em Journal of Combinatorial Theory, Series B}, 75(1):61--73, 1999.

\bibitem{DiksDSV93}
K.~Diks, H.~N. Djidjev, O.~Sykora, and I.~Vrto.
\newblock Edge separators of planar and outerplanar graphs with applications.
\newblock {\em Journal of Algorithms}, 14(2):258 -- 279, 1993.

\bibitem{ElmanSW05}
H.~Elman, D.~Silvester, and A.~Wathen.
\newblock {\em Finite Elements and Fast Iterative Solvers: with Applications in
  Incompressible Fluid Dynamics}.
\newblock Oxford University Press, USA, 2005.

\bibitem{EvenNRS99}
G.~Even, J.~Naor, S.~Rao, and B.~Schieber.
\newblock Fast approximate graph partitioning algorithms.
\newblock {\em SIAM Journal on Computing}, 28(6):2187--2214, 1999.

\bibitem{AF12thesis}
A.~E. Feldmann.
\newblock {\em Balanced Partitioning of Grids and Related Graphs: A Theoretical
  Study of Data Distribution in Parallel Finite Element Model Simulations}.
\newblock PhD thesis, ETH Zurich, April 2012.
\newblock Diss.-Nr. ETH: 20371.

\bibitem{AF12}
A.~E. Feldmann.
\newblock Fast balanced partitioning is hard, even on grids and trees.
\newblock In {\em Proceedings of the 37th International Symposium on
  Mathematical Foundations of Computer Science (MFCS)}, pages 372--382, 2012.

\bibitem{AFDasWidmayer11}
A.~E. Feldmann, S.~Das, and P.~Widmayer.
\newblock Restricted cuts for bisections in solid grids: A proof via polygons.
\newblock In {\em Proceedings of the 37th International Workshop on
  Graph-Theoretic Concepts in Computer Science (WG)}, pages 143--154, 2011.

\bibitem{AFFoschini12}
A.~E. Feldmann and L.~Foschini.
\newblock Balanced partitions of trees and applications.
\newblock In {\em 29th International Symposium on Theoretical Aspects of
  Computer Science (STACS)}, pages 100--111, 2012.

\bibitem{AFWidmayer11}
A.~E. Feldmann and P.~Widmayer.
\newblock An {$O(n^4)$} time algorithm to compute the bisection width of solid
  grid graphs.
\newblock In {\em Proceedings of the 19th Annual European Symposium on
  Algorithms (ESA)}, pages 143--154, 2011.

\bibitem{GareyJ79}
M.~R. Garey and D.~S. Johnson.
\newblock {\em Computers and Intractability: A Guide to the Theory of
  NP-Completeness}.
\newblock W.H. Freeman and Co., 1979.

\bibitem{GareyJS76}
M.~R. Garey, D.~S. Johnson, and L.~Stockmeyer.
\newblock Some simplified {NP}-complete graph problems.
\newblock {\em Theoretical Computer Science}, 1(3):237--267, 1976.

\bibitem{KarypisK95}
G.~Karypis and V.~Kumar.
\newblock {METIS}-unstructured graph partitioning and sparse matrix ordering
  system, version 2.0.
\newblock Technical report, University of Minnesota, 1995.

\bibitem{Khot06}
S.~Khot.
\newblock Ruling out {PTAS} for graph min-bisection, dense $k$-subgraph, and
  bipartite clique.
\newblock {\em SIAM Journal on Computing}, 36(4):1025--1071, 2006.

\bibitem{KhotV05}
S.~A. Khot and N.~K. Vishnoi.
\newblock The {U}nique {G}ames {C}onjecture, integrality gap for cut problems
  and embeddability of negative type metrics into $\ell_1$.
\newblock In {\em Proceedings of the 46th Annual IEEE Symposium on Foundations
  of Computer Science (FOCS)}, pages 53--62, 2005.

\bibitem{KleinPR93}
P.~Klein, S.~Plotkin, and S.~Rao.
\newblock Excluded minors, network decomposition, and multicommodity flow.
\newblock In {\em Proceedings of the 25th Annual ACM Symposium on Theory of
  Computing (STOC)}, pages 682--690, 1993.

\bibitem{KrauthgamerNS09}
R.~Krauthgamer, J.~Naor, and R.~Schwartz.
\newblock Partitioning graphs into balanced components.
\newblock In {\em Proceedings of the 20th Annual ACM-SIAM Symposium on Discrete
  Algorithms (SODA)}, pages 942--949, 2009.

\bibitem{KwatraSETB03}
V.~Kwatra, A.~Sch{\"o}dl, I.~Essa, G.~Turk, and A.~Bobick.
\newblock Graphcut textures: Image and video synthesis using graph cuts.
\newblock {\em ACM Transactions on Graphics}, 22(3):277--286, 2003.

\bibitem{LeightonR99}
T.~Leighton and S.~Rao.
\newblock Multicommodity max-flow min-cut theorems and their use in designing
  approximation algorithms.
\newblock {\em Journal of the ACM}, 46(6):787--832, 1999.

\bibitem{LiptonT80}
R.~Lipton and R.~Tarjan.
\newblock Applications of a planar separator theorem.
\newblock {\em SIAM Journal on Computing}, 9:615--627, 1980.

\bibitem{MacGregor1978}
R.~M. {MacGregor}.
\newblock {\em On Partitioning a Graph: a Theoretical and Empirical Study.}
\newblock PhD thesis, University of California, Berkeley, 1978.

\bibitem{PapadimitriouS96}
C.~Papadimitriou and M.~Sideri.
\newblock The bisection width of grid graphs.
\newblock {\em Theory of Computing Systems}, 29:97--110, 1996.

\bibitem{ParkP93}
J.~K. Park and C.~A. Phillips.
\newblock Finding minimum-quotient cuts in planar graphs.
\newblock In {\em Proceedings of the 25th Annual ACM Symposium on Theory of
  Computing (STOC)}, pages 766--775, 1993.

\bibitem{Racke08}
H.~R{\"a}cke.
\newblock Optimal hierarchical decompositions for congestion minimization in
  networks.
\newblock In {\em Proceedings of the 40th Annual ACM Symposium on Theory of
  Computing (STOC)}, pages 255--264, 2008.

\bibitem{SimonT97}
H.~D. Simon and S.~H. Teng.
\newblock How good is recursive bisection?
\newblock {\em SIAM Journal on Scientific Computing}, 18(5):1436--1445, 1997.

\bibitem{LeahyW93}
Z.~Wu and R.~Leahy.
\newblock An optimal graph theoretic approach to data clustering: Theory and
  its application to image segmentation.
\newblock {\em IEEE Transactions on Pattern Analysis and Machine Intelligence},
  15(11):1101--1113, 1993.

\end{thebibliography}

\end{document}